\theoremstyle{plain}
\newtheorem{theorem}{Theorem}[section]
\newtheorem{corollary}[theorem]{Corollary}
\newtheorem{lemma}[theorem]{Lemma}
\newtheorem{definition}[theorem]{Definition}
\newtheorem{example}[theorem]{Example}
\theoremstyle{nonumberplain}
\newtheorem{proof}{Proof}
\newcommand{\thmbreak}{\unskip\hfil\penalty0\hfilneg\hskip1em plus1em minus.5em\relax}
\pgfplotsset{compat=1.14}
\newcommand{\kwd}{\emph}
\title{Formulating Beurling LASSO for Source Separation via Proximal Gradient Iteration}
\author{Sören Schulze\footnote{Correspondence: \href{mailto:sschulze@uni-bremen.de}{sschulze@uni-bremen.de}}\\Center for Industrial Mathematics\\University of Bremen
  \and
  Emily J. King\\Mathematics Department\\Colorado State University}
\date{February 16, 2022}
\begin{document}
\maketitle

\begin{abstract}
  Beurling LASSO generalizes the LASSO problem to finite Radon measures
  regularized via their total variation.
  Despite its theoretical appeal, this space
  is hard to parametrize, which poses an algorithmic challenge.
  We propose a formulation of continuous convolutional source separation
  with Beurling LASSO that avoids the explicit computation of the
  measures and instead employs the duality transform of the proximal mapping.
\end{abstract}

\section{Introduction}

LASSO is a finite-dimensional least-squares problem that is regularized via
the $1$-norm \cite{Tibshirani96}.
In the \kwd{basis pursuit denoising} formulation
\parencite[cf.][Section 3.1]{Foucart13}, it can be written as:
\begin{equation}\label{eq:bpd}
  v=\argmin_{v\in\R^n} \frac{1}{2}\snorm{Av-b}_2^2+\alpha\snorm{v}_1,\qquad \alpha>0,
\end{equation}
where $A\in\R^{m\times n}$ and $b\in\R^m$.  Algorithmically, it is usually
solved via \kwd{proximal gradient} methods
\parencite[cf.][Section~4.2]{Parikh14}, often called ISTA in that setting,
or an accelerated version called FISTA \parencite{Beck09}.

A shortcoming of \eqref{eq:bpd} is that $v$ is discrete.  Therefore, while
it is possible to represent convolutions in this formulation such as
via \kwd{convolutional sparse coding} \parencite{Bristow13}, those are
necessarily limited to a predetermined grid.  An approach to off-the-grid
convolutions was proposed by \textcite{Ekanadham11} as
\kwd{continuous basis pursuit}; it still uses a grid but interpolates
between the points.

Beurling LASSO avoids the discretization of the solution altogether by
operating on measures, and it is not limited to convolutions.
However, it is hard to represent such measures parametrically, which
poses a problem in practical applications.  For the purposes of
source separation, however, we do not need explicit access to the
full solution; instead, we present an approach to obtain the separated sources
via proximal gradient iteration.

\section{Beurling LASSO}

\kwd{Beurling LASSO} (BLASSO\index{BLASSO|see{Beurling LASSO}})
\parencites{Castro12}{Bredies13}{Catala17}[cf.][]{Poon19}
is a variation of basis pursuit denoising \eqref{eq:bpd}
where the solution is a finite \kwd{Radon measure}.  Formulations of
basis pursuit denoising in infinite-dimensional vector spaces
are often called \kwd{Tikhonov}-regularized problems
\parencite[cf.][Chapter~4]{Schuster12}.  We first state:
\[
v=\argmin_{v\in\mathcal{X}} \frac{1}{2}\snorm{Av-b}_{\mathcal{H}}^2+\alpha\snorm{v}_{\mathcal{X}},\qquad \alpha>0,
\]
where $\mathcal{X}$ is a real Banach space and $\mathcal{H}$ is a real
Hilbert space, $A\colon\mathcal{X}\to\mathcal{H}$ is a
continuous linear operator, and $b\in\mathcal{H}$.

Banach spaces can be very general; for instance, $C_0(\R)$
is a Banach space with the norm
\[
\snorm{x}_\infty=\max_{\omega\in\R}\sabs{x(\omega)}.
\]
Via the Riesz representation theorem \parencite[cf.][{}6.19]{Rudin87},
it follows that its dual space $\mathcal{M}(\R)$
is that of finite regular signed Borel measures,
also called finite Radon measures \parencite[cf.][Section~2]{Poon19}.
It becomes a Banach space when equipped with the norm of \kwd{total variation}~(TV\index{TV|see{total variation}}):
\[
\norm{\nu}_{\mathrm{TV}}=\sabs{\nu}(\R)=\sup_{x\in C_0(\R)}\biggl\{\int x\:\dif\nu:\snorm{x}_{\infty}\leq 1\biggr\},\qquad
\nu\in\mathcal{M}(\R),
\]
which is the dual norm of $\snorm{\cdot}_\infty$.

Since $C_0(\R)\supset\mathcal{S}(\R)$ (where $\mathcal{S}(\R)$ is the
Schwartz space over $\R$ \cite[cf.][{}7.3]{Rudin91}), it follows that
$\mathcal{M}(\R)=C_0'(\R)\subset\mathcal{S}'(\R)$,
and therefore, any finite Radon measure can also be regarded as a distribution.

\begin{example}
  \parencite[cf.][Sections 1--3]{Poon19}\hskip1em plus1em minus.5em
  Let $X\in\mathcal{B}(\R)$, where $\mathcal{B}(\R)$ is the
  Borel $\sigma$-algebra over~$\R$.
  \begin{itemize}
  \item The \kwd{Dirac measure} is defined as:
    \[
    \delta(X)=
    \begin{cases}
      1,&\text{if $0\in X$},\\
      0,&\text{otherwise},
    \end{cases}
    \]
    and it can be identified with the Dirac $\delta$-distribution.
  \item The Dirac measure can be translated, and linear combinations
    of translated Dirac measures can be summed.  In fact,
    for any series $(c_j)\in\ell_1(\Z)$, we can define:
    \[
    \nu_c(X)=\sum_{j\in\Z}c_j\,\chi_{j\in X},
    \]
    such that $\snorm{\nu_c}_{\mathrm{TV}}=\snorm{(c_j)}_{\ell_1}$.
    Here, $\chi_{j\in X}=1$ if $j\in X$ and $0$ otherwise.
  \item For any function $f\in L_1(\R)$, we can define:
    \[
    \nu_f(X)=\int_X f\:\dif\lambda,
    \]
    where $\lambda$ is the Lebesgue measure.  Then
    $\snorm{\nu_f}_{\mathrm{TV}}=\snorm{f}_{L_1}$, which may seem surprising
    since the TV-norm on functions has a different definition and
    can also be used as a regularizer \parencite[cf.][Section 8.1]{Vogel02}.
    However, when considering
    \[
    F(t)=\int_{-\infty}^t f(\omega)\:\dif\omega\qquad
    \text{such that}\qquad
    \nu_f\bigl((a,b])=F(b)-F(a),
    \]
    it follows that $\snorm{F}_{\textrm{TV}}=\snorm{\nu_f}_{\textrm{TV}}$
    \parencite[cf.][Theorem 3.29]{Folland99}.
  \end{itemize}
  With these identifications, we can conclude that both
  $\ell_1(\Z)\subset\mathcal{M}(\R)$
  and $L_1(\R)\subset\mathcal{M}(\R)$.
\end{example}

We now state the BLASSO problem as:
\begin{equation}\label{eq:blasso}
\min_{\nu\in\mathcal{M}(\R)}\biggl[\frac{1}{2}\snorm{A\nu-b}^2_{\mathcal{H}}+\alpha\snorm{\nu}_{\mathrm{TV}}\biggr],\qquad\alpha>0.
\end{equation}
The immediate difficulty is that while the space $\mathcal{M}(\R)$
is versatile, it is also hard to parametrize.  It can be shown that
under certain assumptions, the TV-norm induces sparsity such that
the solution $\nu$ is a finite linear
combination of translated Dirac measures \parencite[Section~4.1]{Bredies20}.
A solver which makes explicit use of this representation is the
\kwd{sliding Frank-Wolfe} algorithm \parencite{Denoyelle19} which employs a
non-convex solver (such as BFGS) in order to refine the shifts and amplitudes.

\section{The Dual Problem}

Another approach to solving the problem \eqref{eq:blasso} is to transform
it in such a way that the \enquote{problematic} space $\mathcal{M}(\R)$
does not need to be explicitly handled anymore.  For this, we need
some elements of convex analysis:

\begin{definition}\label{def:convex}
  \parencites[cf.][{}4.2]{Rudin91}[Sections 2.3, 2.4]{Zalinescu02}\thmbreak
  Let $\mathcal{X}$ be a real topological vector space
  vector space and $x\colon\mathcal{X}\to\overline{\R}$.
  \begin{itemize}
  \item The topological vector space $\mathcal{X}^*$ is defined such that
    either $\mathcal{X}^*=\mathcal{X}'$ or $(\mathcal{X}^*)'=\mathcal{X}$, where $\mathcal{X}'$ is the topological dual space of $\mathcal{X}$, and
    $(\mathcal{X}^*)'$ is the topological dual space of $\mathcal{X}^*$.
    For $\omega\in\mathcal{X}$ and $\omega^*\in\mathcal{X}^*$, we note the
    \kwd{dual pairing} $\iprod{\omega^*,\omega}$ such that either
    $\iprod{\omega^*,\omega}=\omega^*(\omega)$ or $\iprod{\omega^*,\omega}=\omega(\omega^*)$.
    For reflexive spaces, this distinction does not matter,
    and for real Hilbert spaces, $\iprod{\cdot,\cdot}$ coincides with
    the inner product up to isomorphism.
    It always holds that $\iprod{\omega,\omega^*}=\iprod{\omega^*,\omega}$.
  \item The function
    $x^*\colon\mathcal{X}^*\to\overline{\R}$ as given by
    \[
    x^*(\omega^*)=\sup_{\omega\in\mathcal{X}}\bigl\{\iprod{\omega^*,\omega}-x(\omega)\bigr\}
    \]
    is the \kwd{convex conjugate} of $x$.  Similarly, the
    \kwd{convex biconjugate} $x^{**}\colon\mathcal{X}\to\overline{\R}$ is given by
    $x^{**}=(x^*)^*$, exploiting the symmetry of the dual pairing.
  \item
    The \kwd{subdifferential} of $x$ is given by:
    \[
    \partial x(\omega)=\bigl\{
    \omega^*\in\mathcal{X}^*:
    \iprod{\omega^*,\tilde{\omega}-\omega}\leq x(\tilde{\omega})-x(\omega)
    ~\text{for all $\tilde{\omega}\in\mathcal{X}$}
    \bigr\}.
    \]
  \end{itemize}
\end{definition}

\begin{lemma}\label{lemma:duality}
  \parencites[cf.][Theorems 2.3.1, 2.4.2]{Zalinescu02}\thmbreak
  Let $\mathcal{X}$ be a real topological vector space and
  $x\colon\mathcal{X}\to\overline{\R}$ be a function
  with convex conjugate $x^*\colon\mathcal{X}^*\to\overline{\R}$.
  Then:
  \begin{enumerate}
  \item[(i)]
    For all $\omega\in\mathcal{X}$ and $\omega^*\in\mathcal{X}^*$, we have
    the \kwd{Fenchel-Young inequality} which states that
    $\iprod{\omega^*,\omega}\leq x(\omega)+x^*(\omega^*)$.
  \item[(ii)]
    We have $\omega^*\in\partial x(\omega)$ iff
    $\iprod{\omega^*,\omega}= x(\omega)+x^*(\omega^*)$.
  \item[(iii)]
    For all $\omega\in\mathcal{X}$, it holds
    $x^{**}(\omega)\leq x(\omega)$.
  \end{enumerate}
\end{lemma}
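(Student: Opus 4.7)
The plan is to derive all three statements directly from the definition of the convex conjugate as a supremum and the definition of the subdifferential; no extra hypotheses such as convexity or lower semicontinuity are needed here.

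For (i), I would just note that the supremum defining $x^*(\omega^*)$ ranges over all $\tilde\omega\in\mathcal{X}$, so choosing in particular $\tilde\omega=\omega$ yields $x^*(\omega^*)\geq \iprod{\omega^*,\omega}-x(\omega)$, which rearranges to Fenchel-Young. For (ii), I would rewrite the defining inequality of $\partial x(\omega)$, namely $\iprod{\omega^*,\tilde\omega-\omega}\leq x(\tilde\omega)-x(\omega)$ for all $\tilde\omega$, in the equivalent form
\[
\iprod{\omega^*,\tilde\omega}-x(\tilde\omega)\leq \iprod{\omega^*,\omega}-x(\omega).
\]
Taking the supremum over $\tilde\omega$ on the left turns the left-hand side into $x^*(\omega^*)$, giving $x^*(\omega^*)\leq \iprod{\omega^*,\omega}-x(\omega)$; combined with the reverse inequality from (i), equality follows. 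Conversely, if equality $\iprod{\omega^*,\omega}=x(\omega)+x^*(\omega^*)$ holds, then $\iprod{\omega^*,\omega}-x(\omega)$ attains the supremum defining $x^*(\omega^*)$, so every other value $\iprod{\omega^*,\tilde\omega}-x(\tilde\omega)$ is bounded above by it, which is precisely the subdifferential inequality.

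For (iii), I would apply (i) with the roles symmetric: for every $\omega^*\in\mathcal{X}^*$ we have $\iprod{\omega^*,\omega}-x^*(\omega^*)\leq x(\omega)$, and taking the supremum of the left-hand side over $\omega^*\in\mathcal{X}^*$ converts it into $x^{**}(\omega)$, yielding $x^{**}(\omega)\leq x(\omega)$.

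The only real obstacle is bookkeeping in the extended reals: since $x$ and $x^*$ take values in $\overline{\R}$, one has to check that the algebraic rearrangements above are legal, i.e.\ that no $\infty-\infty$ arises. In every case at least one of the terms is finite or else the stated inequality is trivially true ($\leq \infty$), so the rearrangements can be carried out separately on the cases where $x(\omega)$ and $x^*(\omega^*)$ are finite, and the degenerate cases are handled by inspection. Apart from this, the argument is a direct unfolding of Definition~\ref{def:convex}.
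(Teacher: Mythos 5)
Your proposal is correct and follows essentially the same route as the paper: (i) by specializing the supremum defining $x^*(\omega^*)$ at $\tilde\omega=\omega$, (ii) by rewriting the subdifferential inequality and taking the supremum over $\tilde\omega$ to get the reverse of Fenchel--Young, and (iii) by applying (i) inside the supremum defining $x^{**}$. Your closing remark on extended-real bookkeeping is a point the paper silently glosses over, but it does not change the argument.
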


\begin{proof}
  Part (i) is shown via:
  \[
  x(\omega)+x^*(\omega^*)
  =x(\omega)+\sup_{\tilde{\omega}}\bigl[\iprod{\omega^*,\tilde{\omega}}-x(\tilde{\omega})\bigr]
  \geq x(\omega)+\iprod{\omega^*,\omega}-x(\omega)
  = \iprod{\omega^*,\omega}.
  \]
  Conversely, $\omega^*\in \partial x(\omega)$, by the definition
  of the subdifferential, holds if and only if:
  \[
  \iprod{\omega^*,\tilde{\omega}-\omega}\leq x(\tilde{\omega})-x(\omega)\qquad\text{for all}\quad\tilde{\omega}\in\mathcal{X},
  \]
  and therefore equivalently:
  \[
  x(\omega)+x^*(\omega^*)
  =x(\omega)+\sup_{\tilde{\omega}}\bigl[\iprod{\omega^*,\tilde{\omega}}-x(\tilde{\omega})\bigr]
  \leq \iprod{\omega^*,\omega}.
  \]
  In combination, this gives us (ii).  Using (i) again, we find:
  \begin{align*}
    x^{**}(\omega)
    =\sup_{\tilde{\omega}^*}\bigl[\iprod{\tilde{\omega}^*,\omega}-x^*(\tilde{\omega}^*)\bigr]
    \leq\sup_{\tilde{\omega}^*}\bigl[x(\omega)+x^*(\tilde{\omega}^*)-x^*(\tilde{\omega}^*)\bigr]
    =x(\omega),
  \end{align*}
  yielding (iii).
\end{proof}

\begin{theorem}[Duality]\label{thm:duality}
  \parencite[cf.][Theorem 2.7.1]{Zalinescu02}\thmbreak
  Let $\mathcal{X},\mathcal{Y}$ be real topological vector spaces.
  Let $\Psi\colon\mathcal{X}\times\mathcal{Y}\to\overline{\R}$,
  and assume the product topology on $\mathcal{X}\times\mathcal{Y}$.
  Then, for $\gamma\in\mathcal{Y}$, we have \kwd{weak duality}:
  \[ \inf_{\tilde{\omega}}\Psi(\tilde{\omega},\gamma)
  \geq \sup_{\tilde{\gamma}^*}\bigl[\iprod{\tilde{\gamma}^*,\gamma}-\Psi^*(0,\tilde{\gamma}^*)\bigr].
  \]
  If $\mathcal{X}$ is locally convex and there exists
  $\omega\in\mathcal{X}$ such that
  $\Psi(\omega,\gamma)=\min_{\tilde{\omega}}\Psi(\tilde{\omega},\gamma)$,
  then for any $(\omega^*,\gamma^*)\in\partial\Psi(\omega,\gamma)$,
  we have \kwd{strong duality}:
  \begin{equation}\label{eq:strongduality}
    \Psi(\omega,\gamma)
  = \iprod{\gamma^*,\gamma}-\Psi^*(0,\gamma^*)
  = \max_{\tilde{\gamma}^*}\bigl[\iprod{\tilde{\gamma}^*,\gamma}-\Psi^*(0,\tilde{\gamma}^*)\bigr],
  \end{equation}
  and $\omega^*=0$.
\end{theorem}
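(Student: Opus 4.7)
The overall plan is to invoke Lemma~\ref{lemma:duality} applied to $\Psi$ on the product space $\mathcal{X}\times\mathcal{Y}$. With the product topology its topological dual is $\mathcal{X}^*\times\mathcal{Y}^*$, and the pairing splits as $\iprod{(\omega^*,\gamma^*),(\omega,\gamma)}=\iprod{\omega^*,\omega}+\iprod{\gamma^*,\gamma}$, so $\Psi^*$ is an ordinary convex conjugate on this product. Weak duality then falls out of Fenchel--Young at the specific dual element $(0,\tilde\gamma^*)$ and primal element $(\tilde\omega,\gamma)$: Lemma~\ref{lemma:duality}(i) gives
\[
\iprod{\tilde\gamma^*,\gamma}=\iprod{(0,\tilde\gamma^*),(\tilde\omega,\gamma)}\leq\Psi(\tilde\omega,\gamma)+\Psi^*(0,\tilde\gamma^*),
\]
and passing to $\sup_{\tilde\gamma^*}$ on one side and $\inf_{\tilde\omega}$ on the other yields the weak duality chain.

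For strong duality I would invoke the hypothesis $(\omega^*,\gamma^*)\in\partial\Psi(\omega,\gamma)$ together with Lemma~\ref{lemma:duality}(ii), which upgrades Fenchel--Young to the equality
\[
\Psi(\omega,\gamma)+\Psi^*(\omega^*,\gamma^*)=\iprod{\omega^*,\omega}+\iprod{\gamma^*,\gamma}.
\]
Once $\omega^*=0$ has been established, this reads $\Psi(\omega,\gamma)=\iprod{\gamma^*,\gamma}-\Psi^*(0,\gamma^*)$; combining with the weak-duality bound in the reverse direction then pins this quantity down as the supremum in the dual problem, and identifies $\gamma^*$ as an attainer, giving the $\max$ formulation of~(\ref{eq:strongduality}).

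The nontrivial step is therefore to verify $\omega^*=0$. Specialising the subgradient inequality to $\tilde\gamma=\gamma$ gives
\[
\iprod{\omega^*,\tilde\omega-\omega}\leq\Psi(\tilde\omega,\gamma)-\Psi(\omega,\gamma)\qquad\text{for all }\tilde\omega\in\mathcal{X},
\]
whose right-hand side is nonnegative by the minimality of $\omega$; equivalently, $\omega^*\in\partial_1\Psi(\cdot,\gamma)(\omega)$. I would then test this with perturbations $\tilde\omega=\omega+\lambda h$ for $\lambda\in\R$ of both signs, and use local convexity of $\mathcal{X}$ to supply enough continuous functionals to separate $\omega^*$ from zero, pinching $\iprod{\omega^*,h}$ between vanishing quantities along each admissible direction $h$.

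The main obstacle is precisely this last step, because the raw subgradient inequality controls $\iprod{\omega^*,h}$ only against a nonnegative expression rather than one that vanishes near $\omega$. To handle it I would route the argument through the value function $\phi(\gamma)=\inf_{\tilde\omega}\Psi(\tilde\omega,\gamma)$, whose convex conjugate is exactly $\Psi^*(0,\cdot)$, and use the standard correspondence that at a primal minimizer $\omega$ the subgradients of $\phi$ at $\gamma$ are in bijection with the subgradients of $\Psi$ at $(\omega,\gamma)$ having vanishing first coordinate. Reconciling this identification with any given $(\omega^*,\gamma^*)\in\partial\Psi(\omega,\gamma)$ then forces $\omega^*=0$, at which point the preceding paragraphs assemble into the full strong-duality statement.
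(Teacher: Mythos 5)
Your handling of weak duality and of strong duality \emph{conditional on} $\omega^*=0$ is correct and is essentially the paper's argument in different clothing: the paper introduces the marginal function $h(\gamma)=\inf_{\tilde\omega}\Psi(\tilde\omega,\gamma)$, computes $h^*(\gamma^*)=\Psi^*(0,\gamma^*)$, reads weak duality off $h^{**}\leq h$ (Lemma \ref{lemma:duality}.iii), and then pinches $h(\gamma)\leq h^{**}(\gamma)\leq h(\gamma)$ using Lemma \ref{lemma:duality}.ii; your pointwise Fenchel--Young inequality at $(0,\tilde\gamma^*)$ and your Fenchel--Young equality at the given subgradient are the same computations without naming $h$.

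The genuine gap is exactly where you locate it: $\omega^*=0$. The value-function correspondence you invoke says that, when $\omega$ minimizes $\Psi(\cdot,\gamma)$, one has $\gamma^*\in\partial h(\gamma)$ if and only if $(0,\gamma^*)\in\partial\Psi(\omega,\gamma)$. This manufactures subgradients of $\Psi$ whose first coordinate vanishes, but it asserts nothing about an \emph{arbitrary} element of $\partial\Psi(\omega,\gamma)$, so there is nothing to ``reconcile'' and the step does not close. Indeed, no argument can close it: take $\mathcal{X}=\mathcal{Y}=\R$ and $\Psi(\omega,\gamma)=\sabs{\omega-\gamma}+\tfrac{1}{2}\gamma^2$ with $\gamma=0$; the minimizer is $\omega=0$, and $(1,-1)\in\partial\Psi(0,0)$ even though $\omega^*=1\neq 0$, while the middle expression of \eqref{eq:strongduality} evaluates to $-\tfrac{1}{2}\neq 0=\Psi(0,0)$. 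So the conclusion as stated needs the extra hypothesis that the chosen subgradient has the form $(0,\gamma^*)$, equivalently $\gamma^*\in\partial h(\gamma)$. You should be aware that the paper's own proof is no more successful at this point --- it dispatches the step with an unexplained appeal to the Hahn--Banach theorem --- so you have not missed an idea that the paper supplies; your observation that the raw subgradient inequality only bounds $\iprod{\omega^*,\tilde\omega-\omega}$ by a nonnegative rather than a vanishing quantity is precisely the obstruction, and the honest repair is to restate the theorem for subgradients of the form $(0,\gamma^*)$ rather than to attempt a proof of the claim as written.
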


\begin{proof}
  As linear functionals in two variables are continuous if and only if they are
  continuous in both components, we can split
  $(\mathcal{X}\times\mathcal{Y})^*=\mathcal{X}^*\times\mathcal{Y}^*$.

  We set $h(\gamma)=\inf_{\tilde{\omega}} \Psi(\tilde{\omega},\gamma)$.
  The convex conjugate can be determined as:
  \begin{align*}
    h^*(\gamma^*)
    =\sup_{\tilde{\gamma}}\bigl[\iprod{\gamma^*,\tilde{\gamma}}-\inf_{\tilde{\omega}} \Psi(\tilde{\omega},\tilde{\gamma})\bigr]
    =\sup_{\tilde{\omega},\tilde{\gamma}}\bigl[\iprod{0,\tilde{\omega}}+\iprod{\gamma^*,\tilde{\gamma}}-\Psi(\tilde{\omega},\tilde{\gamma})\bigr]
    =\Psi^*(0,\gamma^*),
  \end{align*}
  and the biconjugate is:
  \[
  h^{**}(\gamma)
  =\sup_{\tilde{\gamma}^*}\bigl[\iprod{\tilde{\gamma}^*,\gamma}-h^*(\tilde{\gamma}^*)\bigr]
  =\sup_{\tilde{\gamma}^*}\bigl[\iprod{\tilde{\gamma}^*,\gamma}-\Psi^*(0,\tilde{\gamma}^*)\bigr].
  \]
  Weak duality follows via Lemma \ref{lemma:duality}.iii.

  If $(\omega^*,\gamma^*)\in\partial\Psi(\omega,\gamma)$, then, by definition:
  \[
  \iprod{\omega^*,\tilde{\omega}-\omega}+\iprod{\gamma^*,\tilde\gamma-\gamma}\leq\Psi(\tilde\omega,\tilde\gamma)-\Psi(\omega,\gamma) \qquad\text{for all}\quad\tilde{\omega}\in\mathcal{X},\quad \tilde\gamma\in\mathcal{Y}.
  \]
  If $h(\gamma)=\Psi(\omega,\gamma)$ for some $\omega\in\mathcal{X}$, then
  $\Psi(\tilde\omega,\gamma)\geq\Psi(\omega,\gamma)$ for all
  $\tilde\omega\in\mathcal{X}$, and it follows via the Hahn-Banach theorem
  \parencite[cf.][{}3.6]{Rudin91}
  that $\omega^*=0$.  Thus, $(0,\gamma^*)\in\partial\Psi(\omega,\gamma)$
  and also $\gamma^*\in\partial h(\gamma)$.
  From Lemma \ref{lemma:duality}.ii, it then follows:
  \[
  h(\gamma)
  =\iprod{{\gamma}^*,\gamma}-h^*({\gamma}^*)
  \leq h^{**}(\gamma).
  \]
  With Lemma \ref{lemma:duality}.iii, this yields
  $h(\gamma)\leq h^{**}(\gamma)\leq h(\gamma)$ and thus
  $h(\gamma)=h^{**}(\gamma)$ with~$\gamma^*$ as a maximizer of the supremum,
  giving \eqref{eq:strongduality}.
\end{proof}

\begin{corollary}[Fenchel-Rockafellar]\label{corol:fenchel}
  \parencite[cf.][Corollary 2.8.5]{Zalinescu02}\thmbreak
  Let $\mathcal{X},\mathcal{Y}$ be real topological vector spaces and
  $f\colon\mathcal{X}\to\overline{\R}$, $g\colon\mathcal{Y}\to\overline{\R}$.
  Assume that $A\colon\mathcal{X}\to\mathcal{Y}$ is a continuous linear
  operator such that $A^*\colon\mathcal{Y}^*\to\mathcal{X}^*$ is its adjoint
  and that $\gamma\in\mathcal{Y}$ is a fixed.
  If $\mathcal{X}$ is locally convex, there exists $\omega\in\mathcal{X}$
  with
  $f(\omega)+g(A\omega-\gamma)=\min_{\tilde\omega}\bigl[f(\tilde\omega)+g(A\tilde\omega-\gamma)\bigr]$ (solving the \kwd{primal problem}),
  and $\omega^*\in\partial f(\omega)$,
  $\gamma^*\in\partial g(A\omega-\gamma)$,
  then we have strong duality with:
  \[
  f(\omega)+g(A\omega-\gamma)
  =\iprod{\gamma^*,\gamma}-f^*(A^*\gamma^*)-g^*(\gamma^*)
  =\max_{\tilde\gamma^*}\bigl[\iprod{\tilde\gamma^*,\gamma}-f^*(A^*\tilde\gamma^*)-g^*(-\tilde\gamma^*)\bigr],
  \]
  where the maximum is called the \kwd{dual problem} (right-hand side).
  Also, $-\gamma^*\in\partial g(A\omega - \gamma)$.
\end{corollary}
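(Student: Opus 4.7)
The plan is to reduce the corollary to Theorem~\ref{thm:duality} by choosing
$\Psi\colon\mathcal{X}\times\mathcal{Y}\to\overline{\R}$ as
$\Psi(\omega,\gamma)=f(\omega)+g(A\omega-\gamma)$. With this choice, the primal objective $f(\omega)+g(A\omega-\gamma)$ is exactly $\min_{\tilde\omega}\Psi(\tilde\omega,\gamma)$ by hypothesis, so the ``parameter'' $\gamma\in\mathcal{Y}$ in Theorem~\ref{thm:duality} is played by the same $\gamma$ here, and local convexity of $\mathcal{X}$ is inherited directly.

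First I would compute $\Psi^*(0,\gamma^*)$ to match it against the stated dual. The substitution $y=A\omega-\gamma$ (equivalently $\gamma=A\omega-y$) turns the defining supremum into a separable one in $(\omega,y)$: the $\omega$-part contributes $\iprod{A^*\gamma^*,\omega}-f(\omega)$, whose supremum is $f^*(A^*\gamma^*)$, and the $y$-part contributes $-\iprod{\gamma^*,y}-g(y)$, whose supremum is $g^*(-\gamma^*)$. Hence $\Psi^*(0,\gamma^*)=f^*(A^*\gamma^*)+g^*(-\gamma^*)$, and substituting into Theorem~\ref{thm:duality} yields the rightmost expression of the claim directly (both as weak and, at the maximizer, as strong duality).

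Second I would verify the subdifferential hypothesis of Theorem~\ref{thm:duality}. Given $\omega^*\in\partial f(\omega)$ and $\gamma^*\in\partial g(A\omega-\gamma)$, I would add the two subgradient inequalities after replacing the test point in $g$ by $A\tilde\omega-\tilde\gamma$. Using $\iprod{\gamma^*,A\tilde\omega-A\omega}=\iprod{A^*\gamma^*,\tilde\omega-\omega}$, this produces
\[
\Psi(\tilde\omega,\tilde\gamma)-\Psi(\omega,\gamma)\geq\iprod{\omega^*+A^*\gamma^*,\tilde\omega-\omega}+\iprod{-\gamma^*,\tilde\gamma-\gamma},
\]
so $(\omega^*+A^*\gamma^*,-\gamma^*)\in\partial\Psi(\omega,\gamma)$. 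Theorem~\ref{thm:duality} then forces the first coordinate to vanish, giving the optimality relation $\omega^*=-A^*\gamma^*$, and identifies $-\gamma^*$ (or equivalently $\gamma^*$ after the substitution $\tilde\gamma^*\mapsto-\tilde\gamma^*$ inside the supremum) as a maximizer of the dual. Reading the same subgradient inequality of $\Psi$ with $\tilde\omega=\omega$ and $\tilde\gamma$ varying directly yields $-\gamma^*\in\partial g(A\omega-\gamma)$, which is the final assertion.

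The main obstacle is purely bookkeeping: the two displayed forms in the statement are related by the dual variable substitution $\tilde\gamma^*\leftrightarrow -\tilde\gamma^*$, so one must take care to state exactly which sign of $\gamma^*$ belongs to $\partial g(A\omega-\gamma)$ and which one appears in $\iprod{\gamma^*,\gamma}$, $f^*(A^*\gamma^*)$, and $g^*(\pm\gamma^*)$. Once the convention is fixed, every step is either an application of Theorem~\ref{thm:duality} (for strong duality) or Lemma~\ref{lemma:duality} (for extracting equality in Fenchel--Young at the optimal pair). No deeper tool than the Hahn--Banach argument already used inside Theorem~\ref{thm:duality} is required.
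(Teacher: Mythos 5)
Your proposal follows the paper's own proof essentially step for step: the same choice $\Psi(\omega,\gamma)=f(\omega)+g(A\omega-\gamma)$, the same change of variables giving $\Psi^*(0,\gamma^*)=f^*(A^*\gamma^*)+g^*(-\gamma^*)$, the same verification that $(\omega^*+A^*\gamma^*,-\gamma^*)\in\partial\Psi(\omega,\gamma)$, and the same invocation of Theorem~\ref{thm:duality} to kill the first coordinate and read off the final subgradient claim. It is correct and requires no changes.
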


\begin{proof}
  We set $\Psi(\omega,\gamma)=f(\omega)+g(A\omega-\gamma)$.  Then:
  \begin{align*}
  \Psi^*(0,\gamma^*)
  &=\sup_{\omega,\gamma}\bigl[\iprod{\gamma^*,\gamma}-f(\omega)-g(A\omega-\gamma)\bigr]\\
  &=\sup_{\omega,\gamma}\bigl[\iprod{\gamma^*,A\omega-\gamma}-f(\omega)-g(\gamma)\bigr]\\
  &=\sup_{\omega,\gamma}\bigl[\iprod{A^*\gamma^*,\omega}-\iprod{\gamma^*,\gamma}-f(\omega)-g(\gamma)\bigr]\\
  &=f^*(A^*\gamma^*)+g^*(-\gamma^*).
  \end{align*}
  If $\omega^*\in\partial f(\omega)$ and
  $\gamma^*\in\partial g(A\omega-\gamma)$,
  then, for all $\tilde\omega\in\mathcal{X}$ and
  $\tilde\gamma\in\mathcal{Y}$:
  \[
  \iprod{\omega^*,\tilde\omega-\omega}
  +\iprod{\gamma^*,A\tilde\omega-\tilde\gamma-A\omega+\gamma}
  \leq
  f(\tilde\omega)-f(\omega)+g(A\tilde\omega-\tilde\gamma)-g(A\omega-\gamma),
  \]
  and so
  $(\omega^*+A^*\gamma^*,-\gamma^*)\in\partial\Psi(\omega,\gamma)$.
  We can now apply Theorem \ref{thm:duality} to obtain strong duality,
  and it follows that $\omega^*+A^*\gamma^*=0$.  Therefore:
  \[
  \iprod{-\gamma^*,\tilde\gamma-\gamma}\leq
  g(A\omega-\tilde\gamma)-g(A\omega-\gamma),
  \]
  and thus $-\gamma^*\in\partial g(A\omega-\gamma)$.
\end{proof}

\subsection{Duality on Beurling LASSO}

For \eqref{eq:blasso}, we can choose:
\[
f(\nu)=\alpha\snorm{\nu}_{\mathrm{TV}},\qquad
g(\gamma)=\frac{1}{2}\snorm{\gamma}_{\mathcal{H}}^2,\qquad
\alpha>0,\quad \gamma=A\nu-b,
\]
in order to apply Corollary \ref{corol:fenchel}.
First we have to show that the minimum is attained.
For the primal problem, this was done by
\textcite[Proposition~3.1]{Bredies13} via the \kwd{direct method}.
We now reenact the proof with some detail added in.  We begin with some
well-known statements from functional and convex analysis:

\begin{lemma}\label{lemma:adjoint}
  \parencite[cf.][Proposition VI.1.3]{Conway90}\thmbreak
  Let $\mathcal{X},\mathcal{Y}$ be topological vector spaces and
  let $B\colon\mathcal{Y}\to\mathcal{X}$ be a continuous linear
  operator.  Then its adjoint $B^*\colon\mathcal{X}'\to\mathcal{Y}'$ is
  weak*-weak*-continuous.
\end{lemma}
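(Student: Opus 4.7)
The plan is to use the universal property characterizing the weak* topology as the initial topology generated by the evaluation maps. Concretely, the weak* topology on $\mathcal{Y}'$ is the coarsest topology such that for each $y\in\mathcal{Y}$ the evaluation $\mathrm{ev}_y\colon\mathcal{Y}'\to\R$, $\psi\mapsto\iprod{\psi,y}$, is continuous. By this universal property, a map into $\mathcal{Y}'$ is weak*-continuous if and only if its composition with every $\mathrm{ev}_y$ is continuous.

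I would therefore fix an arbitrary $y\in\mathcal{Y}$ and compute the composition $\mathrm{ev}_y\circ B^*\colon\mathcal{X}'\to\R$. Unwinding definitions, for any $\phi\in\mathcal{X}'$ one has $(\mathrm{ev}_y\circ B^*)(\phi)=\iprod{B^*\phi,y}=\iprod{\phi,By}=\mathrm{ev}_{By}(\phi)$, so the composition is exactly the evaluation map at the point $By\in\mathcal{X}$. This map is weak*-continuous on $\mathcal{X}'$ by the very definition of the weak* topology there. Since this holds for every $y$, the universal property yields weak*-weak*-continuity of $B^*$.

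The only subtlety worth pointing out is that the identity $\iprod{B^*\phi,y}=\iprod{\phi,By}$ is precisely the definition of the adjoint $B^*$, and it requires $By\in\mathcal{X}$ to make sense of the right-hand side as an element of $\mathcal{X}'$; continuity and linearity of $B$ ensure that $B^*\phi=\phi\circ B$ indeed lies in $\mathcal{Y}'$. There is no genuine obstacle here, since the result is really a tautology once one knows that weak* topologies are initial topologies with respect to evaluations; the only thing to be careful about is to phrase the argument in terms of the initial-topology characterization rather than, say, nets, so that it works for arbitrary topological vector spaces without metrizability or local convexity assumptions.
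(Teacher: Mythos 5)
Your proof is correct and is essentially the paper's argument: both hinge on the adjoint identity $\iprod{B^*\phi,y}=\iprod{\phi,By}$ together with the defining property of the weak* topology. The paper phrases the last step as convergence of nets (``as $\omega^*$ converges weak*, so does $B^*\omega^*$''), whereas you invoke the equivalent initial-topology universal property; this is a cosmetic difference, and your version is if anything slightly cleaner to state rigorously.
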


\begin{proof}
  By the definition of the adjoint, we have, for any $\omega^*\in\mathcal{X}'$
  and $\gamma\in\mathcal{Y}$:
  \[
  \iprod{B\gamma,\omega^*}
  =\iprod{\gamma,B^*\omega^*}.
  \]
  Now, as $\omega^*$ converges in the weak* topology over
  $\mathcal{X}'$, then
  $\iprod{B\gamma,\omega^*}$ converges in $\R$, so
  $B^*\omega^*$ converges in the weak* topology over $\mathcal{Y}'$.
\end{proof}

\begin{lemma}\label{lemma:iso}
  \parencites[cf.][Section~6.8]{Aliprantis06}[Proposition~VI.1.4]{Conway90}\thmbreak
  Let $\mathcal{H}$ be a real Hilbert space and let $\mathcal{X}$ be a
  real Banach space.  Let $A\colon\mathcal{X}'\to\mathcal{H}$ and
  $B\colon\mathcal{H}\to\mathcal{X}$ be continuous linear operators
  such that $B^*=A$.  For any $\omega^*\in\mathcal{X}'$ and
  $\gamma\in\mathcal{H}$, it holds that
  $\omega^*(B\gamma)=A^*\gamma(\omega^*)$.
\end{lemma}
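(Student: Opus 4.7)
The plan is to establish the identity by unfolding the defining properties of the two adjoints, $B^*$ and $A^*$, and using the Riesz identification of $\mathcal{H}$ with its dual. At heart, the lemma says that, under the appropriate predual identifications, the preadjoint of $A=B^*$ is $B$ itself, so the proof should be little more than a bookkeeping of conventions followed by a short chain of defining equalities.

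First I would fix the identifications to be used. Since $\mathcal{H}$ is a Hilbert space, the Riesz representation theorem gives $\mathcal{H}^*=\mathcal{H}$, with the dual pairing coinciding with the inner product. For the other side, I would choose the second option offered by Definition~\ref{def:convex}, namely $(\mathcal{X}')^*=\mathcal{X}$. The legitimacy of this predual choice is the only nontrivial ingredient: it requires that $A^*$ really take values in $\mathcal{X}\subset\mathcal{X}''$ rather than merely in $\mathcal{X}''$. This is exactly what Lemma~\ref{lemma:adjoint} provides, since $A=B^*$ is weak*-weak*-continuous, so its adjoint is naturally defined against the predual $\mathcal{X}$ of $\mathcal{X}'$.

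With these identifications in place, the computation is a short chain of equalities. For $\omega^*\in\mathcal{X}'$ and $\gamma\in\mathcal{H}$,
\[
  \omega^*(B\gamma)
  = \iprod{B\gamma,\omega^*}
  = \iprod{\gamma,B^*\omega^*}_{\mathcal{H}}
  = \iprod{A\omega^*,\gamma}_{\mathcal{H}}
  = \iprod{\omega^*,A^*\gamma}
  = A^*\gamma(\omega^*),
\]
where the first and last equalities invoke the symmetric dual pairing between $\mathcal{X}$ and $\mathcal{X}'$ from Definition~\ref{def:convex}, the second is the defining property of $B^*$, the third substitutes $B^*=A$, and the fourth is the defining property of $A^*$ under the Riesz and predual identifications just fixed.

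I expect the principal obstacle not to be the algebra but the consistency of conventions: making sure the symmetric pairing, the Riesz identification, and the predual choice $(\mathcal{X}')^*=\mathcal{X}$ are all applied in a compatible way, and that Lemma~\ref{lemma:adjoint} is cited precisely at the point where the predual choice is cashed in. Once those conventions are pinned down, the statement reduces essentially to a tautology.
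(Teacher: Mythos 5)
Your proposal is correct and follows essentially the same route as the paper: the same chain of defining equalities $\iprod{B\gamma,\omega^*}=\iprod{\gamma,B^*\omega^*}=\iprod{\gamma,A\omega^*}=\iprod{A^*\gamma,\omega^*}$ using the Riesz identification of $\mathcal{H}$ with its dual. Your additional remarks on the predual identification and the role of Lemma~\ref{lemma:adjoint} are careful bookkeeping that the paper leaves implicit, but they do not change the argument.
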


\begin{proof}
  Since $\mathcal{H}$ is a real Hilbert space, we have for any
  $\gamma\in\mathcal{H}$ that $\gamma=\iprod{\gamma,\cdot}\in\mathcal{H}^*$,
  and therefore, with $\omega^*\in\mathcal{X}$:
  \[
  \iprod{B\gamma,\omega^*}
  =\iprod{\gamma,B^*\omega^*}
  =\iprod{\gamma,A\omega^*}
  =\iprod{A^*\gamma,\omega^*}.
  \]
\end{proof}

\begin{lemma}
  \parencite[cf.][Lemma~6.22]{Aliprantis06}\thmbreak
  Let $\mathcal{X}$ be a real Banach space and let $\mathcal{X}'$ be its dual
  space with the norm~$\snorm{\cdot}_{\mathcal{X}'}$.
  Then $\snorm{\cdot}_{\mathcal{X}'}$ is weak* lower semicontinuous.
\end{lemma}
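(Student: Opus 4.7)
The plan is to exhibit the dual norm on $\mathcal{X}'$ as a pointwise supremum of weak*-continuous functions; the well-known fact that any such supremum is lower semicontinuous then delivers the result immediately.

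First I would recall the variational formula for the dual norm,
\[
\snorm{\omega^*}_{\mathcal{X}'}
= \sup_{\snorm{\omega}_{\mathcal{X}}\leq 1}\iprod{\omega^*,\omega},
\]
which is just the definition of the operator norm of $\omega^*\in\mathcal{X}'$ acting on $\mathcal{X}$ (using $-\omega$ if necessary to remove the absolute value, since the unit ball is symmetric). For each fixed $\omega\in\mathcal{X}$ with $\snorm{\omega}_{\mathcal{X}}\leq 1$, the evaluation map $\omega^*\mapsto\iprod{\omega^*,\omega}$ is weak*-continuous by the very definition of the weak* topology on $\mathcal{X}'$, namely the coarsest topology making all such evaluation maps continuous.

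Next I would invoke the standard topological fact that the pointwise supremum of a family of continuous (real-valued) functions is lower semicontinuous: if $\{f_i\}_{i\in I}$ are continuous and $f=\sup_i f_i$, then for any $c\in\R$ the set $\{f>c\}=\bigcup_i\{f_i>c\}$ is a union of open sets, hence open. Applied to the family $\{\omega^*\mapsto\iprod{\omega^*,\omega}:\snorm{\omega}_{\mathcal{X}}\leq 1\}$, this shows that $\snorm{\cdot}_{\mathcal{X}'}$ is weak* lower semicontinuous on $\mathcal{X}'$.

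There is no real obstacle here; the only thing to be slightly careful about is the direction of the inequality (lower, not upper, semicontinuity), which is precisely what a supremum produces, and the fact that even though the dual norm is generally not weak*-continuous, the one-sided semicontinuity is all that the direct-method argument in the next step will require.
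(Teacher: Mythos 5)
Your proof is correct, and it rests on the same key fact as the paper's --- that the dual norm is the supremum of the weak*-continuous evaluation maps $\omega^*\mapsto\iprod{\omega^*,\omega}$ over the unit ball of $\mathcal{X}$ --- but you package the lower-semicontinuity step differently. The paper works with the $\liminf$ characterization directly: it expands $\liminf_{\tilde\omega^*\to\omega^*}\snorm{\tilde\omega^*}_{\mathcal{X}'}$ as $\sup_{V\ni\omega^*}\inf_{\tilde\omega^*\in V}\sup_{\snorm{\tilde\omega}=1}\iprod{\tilde\omega^*,\tilde\omega}$, applies the elementary interchange $\inf\sup\geq\sup\inf$, and then uses weak*-continuity of each evaluation to recover $\snorm{\omega^*}_{\mathcal{X}'}$ as a lower bound. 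You instead invoke the topological characterization of lower semicontinuity via open superlevel sets, $\{f>c\}=\bigcup_i\{f_i>c\}$, applied to the same family of evaluation functionals. The two arguments are logically interchangeable here; yours is the more standard and compact citation of a general principle, while the paper's explicit $\liminf$ computation makes the mechanism self-contained without appealing to the superlevel-set lemma. Your side remarks --- dropping the absolute value by symmetry of the unit ball, and the observation that only one-sided semicontinuity can be expected since the dual norm is not weak*-continuous --- are both accurate.
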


\begin{proof}
  When we apply the definition and assume that the neighborhood
  $V\subset\mathcal{X}'$ is always open in the weak* topology,
  we have:
  \begin{align*}
  \liminf_{\substack{\tilde\omega^*\to\omega^*\\\text{weak*}}} \snorm{\tilde\omega^*}_{\mathcal{X}'}
  &=\sup_{V\ni\omega^*}\inf_{\substack{\tilde\omega^*\in V\\\tilde\omega^*\neq\omega^*}}
   \snorm{\tilde\omega^*}_{\mathcal{X}'}\\
  &=\sup_{V\ni\omega^*}\inf_{\substack{\tilde\omega^*\in V\\\tilde\omega^*\neq\omega^*}}\sup_{\snorm{\tilde\omega}=1}
   \iprod{\tilde\omega^*,\tilde\omega}\\
  &\geq \sup_{V\ni\omega^*}\sup_{\snorm{\tilde\omega}=1}\inf_{\substack{\tilde\omega^*\in V\\\tilde\omega^*\neq\omega^*}}
   \iprod{\tilde\omega^*,\tilde\omega}\\
  &= \sup_{\snorm{\tilde\omega}=1}\sup_{V\ni\omega^*}\inf_{\substack{\tilde\omega^*\in V\\\tilde\omega^*\neq\omega^*}}
   \iprod{\tilde\omega^*,\tilde\omega}\\
  &= \sup_{\snorm{\tilde\omega}=1}\lim_{\substack{\tilde\omega^*\to\omega^*\\\text{weak*}}}
   \iprod{\tilde\omega^*,\tilde\omega}\\
  &= \sup_{\snorm{\tilde\omega}=1}
   \iprod{\omega^*,\tilde\omega}\\
  &= \snorm{\omega^*}_{\mathcal{X}'}.
  \end{align*}
\end{proof}

\begin{lemma}\label{lemma:mincompact}
  \parencite[cf.][Theorem 2.43]{Aliprantis06}\thmbreak
  Let $\mathcal{X}'$ be a dual real Banach space.  Let
  $f\colon\mathcal{X}'\to\R$ be a weak* lower semicontinuous function.
  If $U\subset\mathcal{X}'$ is a weak* compact set, then $f$ attains its
  minimum on $U$.
\end{lemma}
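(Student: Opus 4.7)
The plan is to mimic the textbook argument that a lower semicontinuous function attains its minimum on a compact set, but being careful that the weak* topology on $\mathcal{X}'$ is in general not metrizable, so the argument has to be phrased with nets in place of sequences.

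First I would set $m=\inf_{\omega^*\in U} f(\omega^*)\in[-\infty,\infty)$ and choose a minimizing net $(\omega^*_\alpha)_{\alpha\in I}$ in $U$ with $f(\omega^*_\alpha)\to m$ in $\overline{\R}$; this is possible by the definition of infimum regardless of whether $m$ is finite or $-\infty$. Next, by weak* compactness of $U$, the net $(\omega^*_\alpha)$ admits a weak*-convergent subnet $(\omega^*_\beta)_\beta$ with some limit $\omega^*\in U$. Passing to a subnet preserves limits in $\overline{\R}$, so we still have $f(\omega^*_\beta)\to m$.

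Finally, the weak* lower semicontinuity of $f$ at $\omega^*$ yields
$f(\omega^*)\leq\liminf_\beta f(\omega^*_\beta)=m$,
while $\omega^*\in U$ forces $f(\omega^*)\geq m$. Hence $f(\omega^*)=m$, and since $f$ is real-valued this also shows automatically that $m>-\infty$, so the infimum is finite and attained at~$\omega^*$.

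The only subtlety, and essentially the reason Lemma~\ref{lemma:adjoint} and its siblings were set up, is the systematic replacement of sequences by nets together with the observation that convergence of the $f$-values to $m$ is inherited by any subnet. Both points are standard but worth naming, because weak* sequential compactness is not available for dual spaces of general Banach spaces (it would require, e.g., separability of~$\mathcal{X}$).
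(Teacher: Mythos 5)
Your proof is correct, but it takes a genuinely different route from the paper. The paper argues by contradiction using the open-cover definition of compactness: assuming the infimum $a$ is not attained, it uses lower semicontinuity to place around each point $\omega^*\in U$ a weak* open neighborhood on which $f\geq(a+f(\omega^*))/2>a$, extracts a finite subcover, and derives a contradiction because the infimum over one of these finitely many neighborhoods would have to be both $\leq a$ and bounded below by a constant strictly greater than $a$. You instead run the direct method: a minimizing net, a weak*-convergent subnet via compactness, and the net characterization of lower semicontinuity to conclude $f(\omega^*)\leq m$. Both arguments are valid in an arbitrary topological space, and you are right that nets (rather than sequences) are forced here since the weak* topology need not be metrizable. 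Your version handles the case $m=-\infty$ explicitly and cleanly (the paper's expression $(a+f(\omega^*))/2$ tacitly presumes $a$ finite), while the paper's version avoids nets and subnets altogether, working only with open sets. One small quibble: your closing remark that Lemma~\ref{lemma:adjoint} ``and its siblings'' were set up to enable the net formalism is off the mark --- that lemma concerns weak*-to-weak* continuity of adjoints and plays no role in this compactness argument --- but this aside does not affect the correctness of your proof.
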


\begin{proof}
  We show this by contradiction.  Assume that the infimum
  $a\coloneqq\inf_{\omega^*\in U}f(\omega^*)$ is not
  attained.  Then, for any $\omega^*\in U$, since $f$ is weak* lower
  semicontinuous, there exists
  a weak* open neighborhood $V_{\omega^*}\subset\mathcal{X}'$
  with $\omega^*\in V_{\omega^*}$ such that
  $f(\tilde\omega^*)\geq (a+f(\omega^*))/2$ for all
  $\tilde\omega^*\in V_{\omega^*}$.  Then $\cup_{\omega^*\in U}V_{\omega^*}$
  is an open cover of the compact set $U$, so there exist
  $\omega_1^*,\dotsc,\omega_m^*\in U$ such that
  $U=\cup_{k=1}^mV_{\omega^*_k}$.  Therefore, there is a
  $k\in\{1,\dotsc,m\}$ such that
  $\inf_{\tilde\omega^*\in V_{\omega^*_k}}f(\tilde\omega^*)\leq a$.
  However, since $f(\tilde\omega^*)\geq (a+f(\omega_k^*))/2>a$ for all
  $\tilde\omega^*\in V_{\omega_k^*}$, this is impossible.
\end{proof}

\begin{corollary}\label{corol:min}
  Let $\mathcal{X}'$ be a dual real Banach space.  Let
  $f\colon\mathcal{X}'\to\R$ be a weak* lower semicontinuous and coercive
  function.  Then $f$ attains its minimum on $\mathcal{X}'$.
\end{corollary}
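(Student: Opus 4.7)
The plan is to use the coercivity of $f$ to confine the search for a minimizer to a closed norm-ball, which by the Banach--Alaoglu theorem is weak* compact, and then invoke Lemma \ref{lemma:mincompact}.

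Concretely, I would first pick any reference point $\omega_0^*\in\mathcal{X}'$ and set $c=f(\omega_0^*)$. By the definition of coercivity, there exists some radius $R>0$ (which I may enlarge so that $R\geq\snorm{\omega_0^*}_{\mathcal{X}'}$) such that $f(\omega^*)>c$ for every $\omega^*\in\mathcal{X}'$ with $\snorm{\omega^*}_{\mathcal{X}'}>R$. Hence every candidate minimizer necessarily lies in the closed ball $B_R=\{\omega^*\in\mathcal{X}':\snorm{\omega^*}_{\mathcal{X}'}\leq R\}$.

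Next, by the Banach--Alaoglu theorem, $B_R$ is weak* compact. Since the restriction of a weak* lower semicontinuous function to a weak*-closed subset is again weak* lower semicontinuous, Lemma \ref{lemma:mincompact} applies and yields some $\omega_\star^*\in B_R$ that minimizes $f$ over $B_R$; in particular $f(\omega_\star^*)\leq f(\omega_0^*)=c$. For any $\omega^*\notin B_R$, the coercivity bound gives $f(\omega^*)>c\geq f(\omega_\star^*)$, so $\omega_\star^*$ is a global minimizer on all of~$\mathcal{X}'$.

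There is no real obstacle here: the argument is essentially a bookkeeping exercise that combines Banach--Alaoglu with the previous lemma. The only subtlety worth noting is the order of quantifiers --- one has to fix a reference value $c=f(\omega_0^*)$ \emph{before} choosing $R$, so that the ball $B_R$ is guaranteed to contain at least one point where $f$ is small enough to dominate every value of $f$ outside the ball.
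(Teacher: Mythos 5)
Your proposal is correct and follows essentially the same route as the paper: use coercivity to confine attention to a closed norm-ball, invoke Banach--Alaoglu for weak* compactness, and then apply Lemma \ref{lemma:mincompact}. If anything, your version is slightly more explicit than the paper's in verifying that the minimizer over the ball is in fact a global minimizer (via the reference value $c=f(\omega_0^*)$), a step the paper leaves implicit.
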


\begin{proof}
  Since $f$ is coercive, for any constant $C>0$, there exists a value $M>0$
  such that if $f(\omega')\leq C$ with $\omega'\in\mathcal{X}'$, then
  $\snorm{\omega'}_{\mathcal{X}'}\leq M$.
  According to the Banach-Alaoglu theorem\linebreak{}
  \parencites[cf.][{}3.15]{Rudin91}[Theorem~V.3.1]{Conway90}, the set
  \[
  U=\{\omega'\in\mathcal{X}':\snorm{\omega'}_{\mathcal{X}'}\leq M\}
  \]
  is weak* compact.  According to Lemma \ref{lemma:mincompact},
  $f$ thus attains its minimum on $U\subset\mathcal{X}'$.
\end{proof}

Assume that there exists a continuous linear operator
$B\colon \mathcal{H}\to C_0(\R)$ such that $A=B^*$.  With
Lemma~\ref{lemma:adjoint}, it follows that $A$ is weak*-weak*-continuous,
and it is also bounded.  Since $f$ and $g$ are both
composed of norms, this means that \eqref{eq:blasso} is coercive and
weak* lower semicontinuous in $\nu$.  Via Corollary \ref{corol:min},
the minimum is attained.

For any $\gamma\in\mathcal{H}$, we set $\gamma^*=\iprod{\gamma,\cdot}$
(in the sense of the inner product), so we have:
\[
\iprod{\gamma^*,\tilde\gamma-\gamma}=\iprod{\gamma,\tilde\gamma-\gamma}\leq
\frac{1}{2}\snorm{\tilde\gamma}^2_{\mathcal{H}}-\frac{1}{2}\snorm{\gamma}^2_{\mathcal{H}}\qquad
\text{for all}\quad \tilde\gamma\in\mathcal{H}.
\]
Thus, $\gamma^*\in\partial g(\gamma)$; in fact,
$\partial g(\gamma)=\{\gamma^*\}$ since if $\hat\gamma^*=\iprod{\hat\gamma,\cdot}$
with $\hat\gamma\neq\gamma$, we then have:
\[
0<\frac{1}{2}\snorm{\hat\gamma-\gamma}_{\mathcal{H}}^2
=-\frac{1}{2}\snorm{\hat\gamma}_{\mathcal{H}}^2+\iprod{\hat\gamma,\hat\gamma-\gamma}
+\frac{1}{2}\snorm{\gamma}_{\mathcal{H}}^2,
\]
so $\hat\gamma^*\not\in\partial g(\gamma)$.

Considering $f$, we know \parencite[cf.][{}6.12]{Rudin87}
that for any $\nu\in\mathcal{M}(\R)$, there exists a Borel-measurable
function $u\colon \R\to\{-1,1\}$ from which we can construct a linear functional
$\nu^*\in\mathcal{M}'(\R)$ with
$\nu^*(\nu)=\int u\:\dif\nu=\snorm{\nu}_{\mathrm{TV}}$.  Then:
\[
\iprod{\nu^*,\tilde\nu-\nu}
=\int u\:\dif(\tilde\nu-\nu)
=\int u\:\dif\tilde\nu-\int u\:\dif\nu
\leq \snorm{\tilde\nu}_{\mathrm{TV}}-\snorm{\nu}_{\mathrm{TV}}\qquad
\text{for all}\quad \tilde\nu\in\mathcal{M}(\R),
\]
and therefore $\alpha\nu^*\in\partial f(\nu)$.  We can now
apply Corollary \ref{corol:fenchel} in order to obtain strong duality.

Even though generally $\partial f(\nu)\not\subseteq C_0(\R)$, we can
identify $A^*\gamma^*=B\gamma^*$ according to Lemma~\ref{lemma:iso},
and therefore it is sufficient to
regard $f^*\colon C_0(\R)\to\overline{\R}$ in order to interpret the result
of Corollary~\ref{corol:fenchel}.  We compute:
\begin{align*}
  f^*(\nu^*)
  &=\sup_{\nu\in\mathcal{M}(\R)}\bigl[\iprod{\nu^*,\nu}-f(\nu)\bigr]\\
  &=\sup_{\nu\in\mathcal{M}(\R)}\bigl[\iprod{\nu^*,\nu}-\alpha\snorm{\nu}_{\mathrm{TV}}\bigr]\\
  &=
  \begin{cases}
    0,&\text{for $\snorm{\nu^*}_{\infty}\leq\alpha$},\\
    \infty,&\text{otherwise}
  \end{cases}\\
  &=\imath_{\snorm{\cdot}_\infty\leq\alpha}(\nu^*),
\end{align*}
where $\imath_{\snorm{\cdot}_\infty\leq\alpha}$ is the \kwd{indicator function},
since, according to the Hahn-Banach theorem, if $\nu^*\neq0$, then
there exists $\nu\in\mathcal{M}(\R)$ such that
$\iprod{\nu^*,\nu}=\snorm{\nu^*}_{\infty}\snorm{\nu}_{\mathrm{TV}}$
becomes arbitrarily large.
For the conjugate of $g$, we have:
\begin{align*}
  g^*(\gamma^*)
  &=\sup_{\gamma\in\mathcal{H}}\biggl[\iprod{\gamma^*,\gamma}-\frac{1}{2}\snorm{\gamma}_{\mathcal{H}}^2\biggr]\\
  &=\sup_{\gamma\in\mathcal{H}}\biggl[\frac{1}{2}\snorm{\gamma^*}_{\mathcal{H}}^2-\frac{1}{2}\snorm{\gamma^*-\gamma}_{\mathcal{H}}^2\biggr]\\
  &=\frac{1}{2}\snorm{\gamma^*}_{\mathcal{H}}^2.
\end{align*}
We can thus formulate the dual problem as:
\begin{equation}\label{eq:blassodual}
\max_{r\in\mathcal{H}}\biggl\{\iprod{r,b}-\frac{1}{2}\snorm{r}_{\mathcal{H}}^2:
\snorm{A^*r}_{\infty}\leq \alpha\biggr\},
\end{equation}
where we have $r=b-A\nu$ due to $\partial g(A\nu-b)=\{b-A\nu\}$.
In other words, the solution of the dual problem is nothing but the
\emph{residual} of the primal problem.  In some applications like
denoising, it could potentially be sufficient to know $A\nu$ while
avoiding stating $\nu$ directly.  Also, the benefit of
solving $A\nu=b-r$ rather than \eqref{eq:blasso} is that it is
only a linear equation and no longer an optimization problem.
This property is exploited by \textcite{Catala17} in a
semidefinite relaxation approach.

While the objective of the dual problem \eqref{eq:blassodual}
is linear and quadratic, its constraint still involves the global absolute
maximum of a function $A^*r\in C_0(\R)$.  However, if $\mathcal{H}$ is discrete,
then knowledge about the structure of $A^*$ can be used to predict a
neighborhood of the maximum.  \textcite[Algorithm 1]{Catala17} again propose
a Frank-Wolfe-type algorithm with~BFGS.

\section{Application to Source Separation}
\label{sec:blassosep}

Conceptually speaking, continuous LASSO is always a hard problem,
and even Beurling LASSO cannot eliminate the difficulty.  However, it gives
a powerful framework in order to analyze the problem in other ways.
In source separation, we can avoid parametrizing the Radon measure $\nu$ by
using an \emph{intermediate representation} instead.

As an illustrative example, let us consider two patterns $y_1,y_2$,
where $y_1$ is the upper half of an ellipse and $y_2$ is
triangular-shaped.  Giving a mixture spectrum, the task is to separate
the contributions of the individual patterns.

In Figure \ref{fig:blasso}, the different stages of representation are
displayed.  The left plot is the complete mixture spectrogram with
the contributions of both patterns.  In the middle column, these
contributions are separated.  The plots in the right column are
linear combinations of shifted Dirac measures (indicated as arrows).
Convolving the spectra in the right column with the respective
patterns gives the spectra in the middle column.

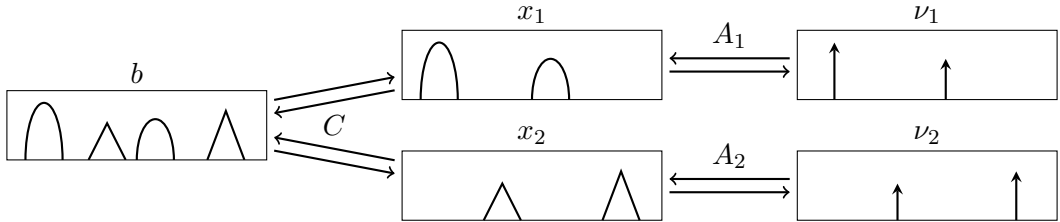
\begin{figure}[htp]
  \centering
  \begin{tikzpicture}
    \begin{axis}[name=mix,at={(-5.2cm,0)},anchor=center,ymin=0,ymax=1.7,xmin=0,xmax=7,xtick=\empty,ytick=\empty,title={$b$},
      width=5cm,height=2.5cm,title style={yshift=-1.5ex}]
      \draw[thick] (axis cs:1,0) ellipse [x radius=0.5,y radius=1.4];
      \draw[thick] (axis cs:4,0) ellipse [x radius=0.5,y radius=1];
      \addplot[thick] coordinates {(2.2,0) (2.7,0.9) (3.2,0)};
      \addplot[thick] coordinates {(5.4,0) (5.9,1.2) (6.4,0)};
    \end{axis}
    \begin{axis}[name=rect,at={(0cm,0.8cm)},anchor=center,ymin=0,ymax=1.7,xmin=0,xmax=7,xtick=\empty,ytick=\empty,
      width=5cm,height=2.5cm,title={$x_1$},title style={yshift=-1.5ex}]
      \draw[thick] (axis cs:1,0) ellipse [x radius=0.5,y radius=1.4];
      \draw[thick] (axis cs:4,0) ellipse [x radius=0.5,y radius=1];
    \end{axis}
    \begin{axis}[name=tri,at={(0cm,-0.8cm)},anchor=center,ymin=0,ymax=1.7,xmin=0,xmax=7,xtick=\empty,ytick=\empty,
      width=5cm,height=2.5cm,title={$x_2$},title style={yshift=-1.5ex}]
      \addplot[thick] coordinates {(2.2,0) (2.7,0.9) (3.2,0)};
      \addplot[thick] coordinates {(5.4,0) (5.9,1.2) (6.4,0)};
    \end{axis}
    \begin{axis}[name=recta,at={(5.2cm,0.8cm)},anchor=center,ymin=0,ymax=1.7,xmin=0,xmax=7,xtick=\empty,ytick=\empty,
      width=5cm,height=2.5cm,title={$\nu_1$},title style={yshift=-1.5ex}]
      \draw[thick,->,>=stealth] (axis cs:1,0) -- (axis cs:1,1.4);
      \draw[thick,->,>=stealth] (axis cs:4,0) -- (axis cs:4,1);
    \end{axis}
    \begin{axis}[name=tria,at={(5.2cm,-0.8cm)},anchor=center,ymin=0,ymax=1.7,xmin=0,xmax=7,xtick=\empty,ytick=\empty,
      width=5cm,height=2.5cm,title={$\nu_2$},title style={yshift=-1.5ex}]
      \draw[thick,->,>=stealth] (axis cs:2.7,0) -- (axis cs:2.7,0.9);
      \draw[thick,->,>=stealth] (axis cs:5.9,0) -- (axis cs:5.9,1.2);
    \end{axis}
    \draw[thick,->,transform canvas={yshift=2.5pt},shorten <=1mm,shorten >=1mm] ($(mix.north east)!0.5!(mix.east)$) -- ($(rect.south west)!0.5!(rect.west)$);
    \draw[thick,<-,transform canvas={yshift=-2.5pt},shorten <=1mm,shorten >=1mm] ($(mix.north east)!0.5!(mix.east)$) -- ($(rect.south west)!0.5!(rect.west)$);
    \draw[thick,<-,transform canvas={yshift=2.5pt},shorten <=1mm,shorten >=1mm] ($(mix.south east)!0.5!(mix.east)$) -- ($(tri.north west)!0.5!(tri.west)$);
    \draw[thick,->,transform canvas={yshift=-2.5pt},shorten <=1mm,shorten >=1mm] ($(mix.south east)!0.5!(mix.east)$) -- ($(tri.north west)!0.5!(tri.west)$);
    \draw[thick,<-,transform canvas={yshift=2.5pt},shorten <=1mm,shorten >=1mm] (rect) -- node[above] {$A_1$} (recta);
    \draw[thick,->,transform canvas={yshift=-2.5pt},shorten <=1mm,shorten >=1mm] (rect) -- (recta);
    \draw[thick,<-,transform canvas={yshift=2.5pt},shorten <=1mm,shorten >=1mm] (tri) -- node[above] {$A_2$} (tria);
    \draw[thick,->,transform canvas={yshift=-2.5pt},shorten <=1mm,shorten >=1mm] (tri) -- (tria);
    \node at (-2.6,0) {$C$};
  \end{tikzpicture}
    \caption{Separation of the contributions of two different patterns
      in a spectrum}
    \label{fig:blasso}
\end{figure}

To formalize this process, we have to extend our framework.
While the spectra in the right column of Figure \ref{fig:blasso}
can be understood as a Radon measure each, the space $\mathcal{M}(\R)$
only accounts for one measure, not multiple ones.  Thus, to
operate with multiple patterns, we have to consider
$\mathcal{M}(\R)^n$ which is then the dual
space of $C_0(\R)^n$, where $n$ is the number of patterns
($n=2$ in the figure).
When we equip the latter with the norm
\[
\snorm{z}_{\infty}
=\max_{i=1,\dotsc,n}\snorm{z_i}_{\infty},\qquad
z=(z_1,\dotsc,z_n)\in C_0(\R)^n
\]
(which is compatible with the product topology), it gives the dual norm:
\[
\snorm{\nu}_{\mathrm{TV}}
=\sum_{i=1}^n\snorm{\nu_i}_{\mathrm{TV}},\qquad
\nu=(\nu_1,\dotsc,\nu_n)\in\mathcal{M}(\R)^n,
\]
complying with \textcite{Bredies13}.  For the Hilbert space
$\mathcal{H}^n$, we use:
\[
\snorm{x}_{\mathcal{H}^n}^2
=\sum_{i=1}^n\snorm{x_i}_{\mathcal{H}}^2,\qquad
x=(x_1,\dotsc,x_n)\in\mathcal{H}^n.
\]

Following Figure \ref{fig:blasso}, the operator
$A\colon\mathcal{M}(\R)^n\to\mathcal{H}^n$ now convolves
the measure $\nu\in\mathcal{M}(\R)^n$ component-wise with
the patterns $y_1,\dotsc,y_n\in C_0(\R)$:
\[
A\nu=\begin{pmatrix}
  A_1\nu_1\\\vdots\\A_n\nu_n
  \end{pmatrix}=
  \begin{pmatrix}
  \nu_1*y_1\\\vdots\\\nu_n*y_n
  \end{pmatrix}\eqqcolon x,
\]
where the convolution is defined via:
\[
(\nu*y)(\omega)=\int y(\omega-s)\:\dif \nu(s).
\]
The space $\mathcal{H}$ and the patterns $y_1,\dotsc,y_n$
have to be chosen such that the pre-adjoint operator is well-defined,
that is, there exists a continuous linear operator
$B\colon \mathcal{H}^n\to C_0(\R)^n$ such that
$A=B^*$.

The operator $C\colon\mathcal{H}^n\to\mathcal{H}$ sums the components
of the individual patterns:
\[
Cx=\sum_{k=1}^n x_k,
\]
and it is obviously linear and continuous.  Combined, we formulate the
primal problem as:
\begin{equation}\label{eq:blassosep}
  \min_{x,\nu}\biggl\{\frac{1}{2}\snorm{Cx-b}_{\mathcal{H}}^2+\alpha\snorm{\nu}_{\mathrm{TV}}:A\nu=x\biggr\}.
\end{equation}
so we have:
\begin{align*}
f(x)&=\min_{\nu}\bigl[\alpha\snorm{\nu}_{\mathrm{TV}}+\imath_0(A\nu-x)]
,\qquad \alpha>0,
\end{align*}
with:
\begin{align*}
f^*(x^*)
&=\sup_{x}\bigl[\iprod{x^*,x}-f(x)\bigr]\\
&=\sup_{\nu,x}\bigl[\iprod{x^*,x}-\alpha\snorm{\nu}_{\mathrm{TV}}-\imath_0(A\nu-x)\bigr]\\
&=\sup_{\nu,x}\bigl[\iprod{x^*,A\nu-x}-\alpha\snorm{\nu}_{\mathrm{TV}}-\imath_0(x)\bigr]\\
&=\sup_{\nu}\bigl[\iprod{A^*x^*,\nu}-\alpha\snorm{\nu}_{\mathrm{TV}}\bigr]\\
&=\imath_{\snorm{\cdot}\leq\alpha}(A^*x^*).
\end{align*}
It would now be straight-forward to apply Corollary \ref{corol:fenchel} again,
but it would still only give the residual, not expose $x$ directly.
However, unlike the original problem \eqref{eq:blasso}, the new problem
\eqref{eq:blassosep} is now one where the solution $x\in\mathcal{H}^n$
lies in a Hilbert space and only the constraint is problematic.

Just like normal LASSO is often solved via the \kwd{proximal mapping}
\parencite[cf.][Section 1.1]{Parikh14}, we can
formulate the proximal mapping for \eqref{eq:blassosep} and apply
Corollary \ref{corol:fenchel} on it:
\begin{align*}
  \operatorname{prox}_f(x)&=\argmin_{\tilde{x}\in\mathcal{H}^n}\biggl[\frac{1}{2}\snorm{\tilde{x}-x}_{\mathcal{H}^n}^2+f(\tilde{x})\biggr]\\
  &=x-\argmax_{x^*\in\mathcal{H}^n}\biggl[\iprod{x^*,x}-\frac{1}{2}\snorm{x^*}_{\mathcal{H}^n}^2-f^*(x^*)\biggr]\\
  &=x-\argmax_{x^*\in\mathcal{H}^n}\biggl[\frac{1}{2}\snorm{x}_{\mathcal{H}^n}^2-\frac{1}{2}\snorm{x^*-x}_{\mathcal{H}^n}^2-f^*(x^*)\biggr]\\
  &=x-\operatorname{prox}_{f^*}(x),
\end{align*}
where we set:
\[
g(x)=\frac{1}{2}\snorm{x}^2_{\mathcal{H}^n},\qquad\text{so}\quad
g^*(x^*)=\frac{1}{2}\snorm{x^*}^2_{\mathcal{H}^n}.
\]
This result is also known as
\kwd{Moreau decomposition} \parencite[cf.][Section 2.5]{Parikh14}.
When substituting $\tilde{x}=A\nu$, the primal
problem here is formally equivalent to \eqref{eq:blasso}, so an optimal
$\nu$ exists, and therefore also an optimal $\tilde{x}$.
The proximal gradient iteration \parencite[cf.][Section~4.2]{Parikh14}
for \eqref{eq:blassosep} is then:
\begin{equation}
\begin{aligned}\label{eq:blassoiter}
  x^{i+1}&=\operatorname{prox}_{\lambda f}\bigl(x^i-\lambda\,(Cx^i-b)\bigr)\\
  &=x^i-\lambda\,(Cx^i-b)-\operatorname{prox}_{(\lambda f)^*}\bigl(x^i-\lambda\,(Cx^i-b)\bigr)\\
  &=x^i-\lambda\,(Cx^i-b)-\argmax_{x^*\in\mathcal{H}^n}
\biggl\{\frac{1}{2}\bignorm{x^*-x^i+\lambda\,(Cx^i-b)}_{\mathcal{H}^n}^2
:\snorm{A^*x^*}_\infty\leq\lambda\alpha\biggr\},
\end{aligned}
\end{equation}
with $\lambda>0$.

So far, we have not specified the choice of the Hilbert space $\mathcal{H}$.
With $\nu_i\in\mathcal{M}(\R)$, $x_i\in\mathcal{H}$, and $i=1,\dotsc,n$,
we have:
\begin{equation*}
\iprod{A_i\nu_i,x_i}
=\int (\nu_i*y_i)(\omega)\, x_i(\omega)\:\dif\omega
=\int\int y_i(\omega -s)\:\dif\nu_i(s)\, x_i(\omega)\:\dif\omega.
\end{equation*}
For the pre-adjoint operator to exist, we need to be able to
swap the integrals.
If $\mathcal{H}=L_2(\R)$, then this is well-defined for
$y_i\in C_0(\R)\cap L_2(\R)$:  As can be shown by applying a version of the
convolution theorem \parencite[cf.][Theorem~2.5.9.a]{Benedetto96}
in combination with Riemann-Lebesgue lemma
\parencites[cf.][{}7.5]{Rudin91}[Theorem 1.4.1.c]{Benedetto96},
the function given by $A^*x_i(s)=\int y_i(\omega -s)\,x_i(\omega)\:\dif\omega$
then lies in $C_0(\R)$ as well.

For computations, it is practical to choose a discrete Hilbert space
such as $\mathcal{H}=\ell_2(\Z)$.  In this case, we need to ensure sufficient
decay of the patterns even when they are sampled.  A~possible choice
is $y_i\in C_0(\R)\cap W(\R)$ (where $W(\R)$ is the \kwd{Wiener space}
\parencite[cf.][Definition~6.1.1]{Groechenig01}),
yielding $A^*x_i\in C_0(\R)\cap L_2(\R)$.
Note that discretizing $x_i\in\mathcal{H}$ does not restrict the space
for~$\nu_i$; however, if the grid is too coarse, then some features of
$y_i$ may disappear between the sampling points.

\section{Conclusion}

With \eqref{eq:blassosep}, we have given an explicit proximal gradient
iteration in order to separate the convolutional contributions of
given patterns from a mixture.  Implicitly, it solves the continuous
problem \eqref{eq:blasso}, but by avoiding representing the measures
directly, the computation can be carried out in a discrete Hilbert
space.

Even though the linear operator $A$ is a convolution in our example,
the formulation is not limited to convolutions as long as the
pre-adjoint operator can be stated.  However, the caveat is that
the bounds of $A^*x^*$ give constraints over a continuous function.
How to incorporate those in a practical solution algorithm is yet to be
determined.

\section*{Acknowledgements}

The first author acknowledges funding by the Deutsche Forschungsgemeinschaft
  (DFG, German Research Foundation) -- Projektnummer 281474342/GRK2224/1.

\renewcommand*{\bibfont}{\raggedright}

\setlength\bibitemsep{.5\itemsep}

\printbibliography[heading=bibintoc]
\end{document}